%% file: 05-10-2026.tex
\documentclass[12pt,a4paper]{article}
\input{preamble}
\newcommand{\SP}{\mathcal{S}}
\begin{document}

\title{A Characterization of Quantile Stable Matchings in Many-to-Many Markets\thanks{%Thanks \ldots. 
We acknowledge financial support
from UNSL through grants 032016 and 030320, from Consejo Nacional
de Investigaciones Cient\'{\i}ficas y T\'{e}cnicas (CONICET) through grant
PIP 112-200801-00655, and from Agencia Nacional de Promoción Cient\'ifica y Tecnológica through grant PICT 2017-2355.}}
%\subtitle{Do you have a subtitle?\\ If so, write it here}

%\titlerunning{Short form of title}        % if too long for running head

\author{Paola Manasero\samethanks [2]\thanks{Instituto de Matem\'{a}tica Aplicada San Luis (UNSL-CONICET) and Departamento de Matemática, Universidad Nacional de San
Luis, San Luis, Argentina. Emails:  \texttt{pbmanasero@unsl.edu.ar} (P. Manasero) and \texttt{pabloneme08@gmail.com} (P. Neme).}    \and Pablo Neme\samethanks[2]}
%\authorrunning{Short form of author list} % if too long for running head

\date{\today}
\maketitle

\begin{abstract}
We study two-sided many-to-many matching markets where agents' preferences satisfy strong substitutability and the law of aggregate demand (LAD). In this framework, the set of stable matchings forms a finite distributive lattice, and quantile stable matchings exist and satisfy the polarization of interests. Building on the cycle-reduction machinery of \cite{bonifacio2022cycles}, we provide a purely combinatorial characterization of quantile stable matchings formulated in terms of preference cycles and their induced precedence order. Specifically, we define an order-theoretic counting function on cycles and prove that each quantile stable matching corresponds to the elimination of an endogenously determined closed set of cycles starting from the firm-optimal stable matching. This characterization extends the rotation-based framework of \cite{cheng2008generalized} from one-to-one markets to many-to-many markets and bypasses the need for exhaustive outcome enumeration or partner sorting across the entire stable set.
\bigskip

\noindent \emph{JEL classification:} C78, D47.\bigskip

\noindent \emph{Keywords:} Many-to-many matching, Quantile stable matchings, Preference cycles, Distributive lattices, Birkhoff's representation theorem.

\end{abstract}
\section{Introduction}
In two-sided matching markets, the Deferred Acceptance algorithm of \cite{gale1962college} reliably outputs stable allocations, but it does so by reaching the extreme boundaries of the stable set: the firm-optimal stable matching (which is unanimously the worst stable outcome for workers) and the worker-optimal stable matching (unanimously the worst for firms). While computationally appealing, this stark polarization of interests raises significant fairness concerns in centralized clearinghouses, where prioritizing one market side imposes maximal welfare losses on the other. 

To overcome this inherent one-sided bias and provide balanced compromises, \cite{teo1998geometry} pioneered the study of \emph{generalized median stable matchings} in one-to-one markets. They established that selecting each agent's quantile-ranked stable partner across the entire set of stable matchings yields an allocation that is itself stable. This formulation introduces a flexible spectrum of egalitarian and compromise solutions between the two market extremes. In particular, the standard median matching treats both sides symmetrically and mitigates the vulnerability of agents to strategic manipulation often observed in extremal mechanisms \citep{klaus2006median, chen2016median, chen2021quantile}, although recent work highlights subtle incentive trade-offs regarding obvious manipulations in contractual settings \citep{arribillaga2025note}.

Moving beyond the one-to-one framework, \cite{chen2016median} extended this compromise paradigm to many-to-many markets under the banner of \emph{quantile stable matchings}. They showed that whenever agents' preferences satisfy the law of aggregate demand (LAD) and \emph{strong substitutability}---a refinement of standard substitutability introduced by \cite{echenique2006theory}---simultaneously assigning to each firm its ranked stable set of workers produces an allocation that is itself stable. Moreover, these compromise matchings satisfy a strict duality of interests across the distributive lattice: a favorable choice for firms corresponds precisely to the dual, symmetric outcome for workers.

Traditionally, defining or computing these quantile stable matchings requires an exhaustive global inspection of the entire stable matching set to identify each agent's ranked stable outcomes. In the one-to-one domain, \cite{cheng2008generalized} overcame this limitation by showing that generalized median matchings can instead be characterized through an ideal-counting argument on a precedence order of rotations---the algorithmic terminology introduced by \cite{irving1986complexity} for cycles in preferences \citep{roth1992two}. In multi-partner contexts, however, a purely structural characterization that identifies these compromise allocations without exhaustive enumeration of the stable set has remained absent. This fundamental contrast naturally raises a central question: \emph{Can quantile stable matchings in many-to-many markets be identified through an intrinsic combinatorial structure without enumerating the entire set of stable matchings?}

In this note, we answer this question affirmatively by extending the combinatorial perspective to many-to-many markets using the concept of \emph{cycles in preferences} introduced by \cite{bonifacio2022cycles}. In their framework, each stable matching adjacent in Blair's order is obtained from another by eliminating a cycle in preferences under a reduced preference profile. We show that the set of all such cycles, endowed with their natural precedence ordering, allows us to characterize every quantile stable matching directly: each compromise matching corresponds to an endogenously defined closed set of cycles, identified solely by how many stable matchings leave each cycle uneliminated.

 The remainder of the note is organized as follows. Section \ref{sec:model} introduces the model, the preference conditions, and the distributive lattice structure of stable matchings. Section \ref{sec:cycles} reviews the preference reduction procedure, cycles in preferences, and their lattice-theoretic connection via Birkhoff's representation theorem. Section \ref{sec:characterization} presents our main combinatorial characterization of quantile stable matchings and its proof. Section \ref{sec:conclusion} concludes with a discussion on computational complexity and directions for future research.
\section{Model and Preliminaries}\label{sec:model}

Let $F$ and $W$ be disjoint, finite sets of firms and workers. Let $A = F \cup W$. Each firm $f \in F$ is endowed with a strict preference relation $P_f$ over $2^W$, and each worker $w \in W$ with a strict preference relation $P_w$ over $2^F$. A preference profile is denoted by $P = (P_a)_{a \in A}$. For any $f \in F$ and $S \subseteq W$, $C_f(S)$ denotes $f$'s chosen subset from $S$ according to $P_f$; similarly, $C_w(S')$ denotes $w$'s chosen subset from $S' \subseteq F$.

A matching $\mu$ is a mapping from $F \cup W$ into $2^{F \cup W}$ such that for all $f \in F$ and $w \in W$: (i) $\mu(f) \subseteq W$, (ii) $\mu(w) \subseteq F$, and (iii) $w \in \mu(f) \iff f \in \mu(w)$.

A matching $\mu$ is \emph{individually rational} if $C_a(\mu(a)) = \mu(a)$ for all $a \in A$. A pair $(f, w) \in F \times W$ blocks $\mu$ if $w \notin \mu(f)$, $w \in C_f(\mu(f) \cup \{w\})$, and $f \in C_w(\mu(w) \cup \{f\})$. A matching $\mu$ is \emph{stable} if it is individually rational and admits no blocking pair. We denote the set of stable matchings by $\SP$.

\begin{definition}
A preference relation $P_a$ satisfies:
\begin{enumerate}
 \item \textbf{Strong Substitutability} if for any $S, S'$ such that $C_a(S') P_a C_a(S)$ and any $b$, 
\[
b \in C_a(S' \cup \{b\}) \implies b \in C_a(S \cup \{b\}).
\footnote{Strong substitutability \citep{echenique2006theory} strictly refines the standard substitutability condition \citep{kelso1982job}, which requires that for all $S \subseteq S'$, $b \in C_a(S' \cup \{b\}) \implies b \in C_a(S \cup \{b\})$. Since $S \subseteq S'$ implies $C_a(S') P_a C_a(S)$, any strongly substitutable preference relation is automatically substitutable.}\]
    \item \textbf{The Law of Aggregate Demand (LAD)} if for all $S \subseteq S'$, $|C_a(S)| \le |C_a(S')|$.
    
\end{enumerate}
\end{definition}

Given an agent's strict preference relation, \cite{blair1988lattice} defined a partial order over subsets of the opposite side of the market induced by the agent's choice. For any firm $f \in F$ with preference relation $P_f$ and any two subsets of workers $S, S' \subseteq W$, we say that $S$ is \emph{Blair-preferred} to $S'$, denoted by $S \ge_f S'$, if 
\[
S = C_f(S \cup S').
\]
That is, $S \ge_f S'$ means that whenever all workers in $S \cup S'$ are available, the most-preferred subset of firm $f$ according to $P_f$ is exactly $S$. We write $S >_f S'$ if $S \ge_f S'$ and $S \ne S'$. Symmetrically, for each worker $w \in W$ and subsets $F', F'' \subseteq F$, the partial order $\ge_w$ is defined by $F' \ge_w F''$ if and only if $F' = C_w(F' \cup F'')$.

At the market level, Blair's order induces a unanimous partial ordering over matchings: for any two matchings $\mu$ and $\mu'$, we write $\mu \ge_F \mu'$ if $\mu(f) \ge_f \mu'(f)$ for all $f \in F$, and $\mu >_F \mu'$ if $\mu \ge_F \mu'$ with $\mu \ne \mu'$. The unanimous order for workers, $\ge_W$, is defined symmetrically. Under substitutable preferences (and in particular under strongly substitutable preferences), \cite{blair1988lattice} established that the set of stable matchings $\mathcal{S}$ forms a lattice with respect to $\ge_F$ (and dually with respect to $\ge_W$). 

When preferences further satisfy LAD, $(\mathcal{S}, \ge_F)$ forms a finite distributive lattice and these lattices are dual, giving rise to the polarization of interests: $\mu \ge_F \mu'$ if and only if $\mu' \ge_W \mu$ for all $\mu, \mu' \in \mathcal{S}$ \citep{alkan2002class}. Furthermore, under strong substitutability, Blair's order on $\mathcal{S}$ coincides with the underlying strict individual preference relation $P_a$ over each agent's stable assignments.

\section{Cycles in Preferences and Closed Sets of Cycles}\label{sec:cycles}
In this section, we review the notion of cycles in preferences introduced by \cite{bonifacio2022cycles}. In their framework, transitions between adjacent stable matchings in the lattice are governed by eliminating elementary cycles of worker-firm pairs under reduced preference profiles. For completeness, the formal construction of the preference reduction procedure is deferred to Appendix \ref{app:reduction}.

For $\mu, \tilde{\mu} \in \SP$ with $\mu >_F \tilde{\mu}$, the reduced preference profile $P^{\mu, \tilde{\mu}}$ eliminates partners that are strictly Blair-preferred to $\mu$ or strictly less Blair-preferred than $\tilde{\mu}$. When $\tilde{\mu} = \mu_W$, we denote this profile by $P^\mu$.

\begin{definition}[Cycle in Preferences]
Let $\mu \in \SP \setminus \{\mu_W\}$. A cycle $\sigma$ for $P^\mu$ is an ordered sequence of worker-firm pairs $\sigma = \{(w_1, f_1), (w_2, f_2), \dots, (w_r, f_r)\}$ such that for all $i = 1, \dots, r$:
\begin{enumerate}
    \item $w_i \in \mu(f_i) \setminus \mu_W(f_i)$,
    \item $C_{f_i}^\mu(W \setminus \{w_i\}) = (\mu(f_i) \setminus \{w_i\}) \cup \{w_{i+1}\}$, with $w_{r+1} = w_1$, and
    \item $C_{w_i}^\mu(\mu(w_i) \cup \{f_{i-1}\}) = (\mu(w_i) \setminus \{f_i\}) \cup \{f_{i-1}\}$, with $f_0 = f_r$.
\end{enumerate}
\end{definition}

Eliminating a cycle $\sigma$ from matching $\mu$ yields the cyclic matching $\mu_\sigma$, defined by $\mu_\sigma(f) = (\mu(f) \setminus \{w_i\}) \cup \{w_{i+1}\}$ for each $f = f_i \in \sigma$, and $\mu_\sigma(f) = \mu(f)$ otherwise. \cite{bonifacio2022cycles} prove that $\mu_\sigma \in \SP$ and $\mu >_F \mu_\sigma$. In terms of the lattice $(\SP, \ge_F)$, this cyclic transition defines an immediate covering relation: there is no stable matching $\mu'' \in \SP$ such that $\mu >_F \mu'' >_F \mu_\sigma$. Thus, we say that $\mu$ \emph{covers} $\mu_\sigma$.

Let $\Sigma$ denote the set of all elementary preference cycles exposed across the lattice $\SP$.

\begin{definition}[Precedence of Cycles]
For two cycles $\sigma, \sigma' \in \Sigma$, we say that $\sigma$ \emph{precedes} $\sigma'$ (denoted $\sigma \le \sigma'$) if cycle $\sigma$ must be eliminated before cycle $\sigma'$ can become exposed in a reduced preference profile along any path from $\mu_F$. 

A subset $S \subseteq \Sigma$ is a \emph{closed set of cycles} (also known in order theory as an \emph{order ideal} or \emph{down-set}) if whenever $\sigma \in S$ and $\sigma' \le \sigma$, then $\sigma' \in S$. We denote the family of all closed sets of cycles by $\mathcal{C}(\Sigma)$.
\end{definition}

By Birkhoff's Representation Theorem for finite distributive lattices,\footnote{Birkhoff's Representation Theorem \citep{birkhoff1940lattice} establishes that any finite distributive lattice $(L, \le)$ is isomorphic to the lattice of down-sets (or order ideals) $\mathcal{J}$ of its partially ordered set of join-irreducible elements $P$.} in our setting the covering relations in $(\mathcal{S}, \ge_F)$ are characterized by the elimination of elementary preference cycles from $\Sigma$. 

Because we initialize this elimination procedure at the firm-optimal stable matching $\mu_F$ (the top element of the lattice), each eliminated cycle moves the matching downward along the covering relations of Blair's order $\ge_F$. Consequently, the lattice of stable matchings $(\mathcal{S}, \ge_F)$ is isomorphic to the family of closed sets of cycles $(\mathcal{C}(\Sigma), \subseteq)$ inverted. More formally, there exists a bijective correspondence between $\mathcal{S}$ and $\mathcal{C}(\Sigma)$ such that:
\begin{enumerate}
    \item The firm-optimal matching $\mu_F$ corresponds to the empty set of cycles, $\emptyset$.
    \item The worker-optimal matching $\mu_W$ corresponds to the full set of cycles, $\Sigma$.
    \item For any two stable matchings $\mu, \mu' \in \mathcal{S}$ with corresponding closed sets $S_\mu, S_{\mu'} \in \mathcal{C}(\Sigma)$,
    \[
    \mu \ge_F \mu' \iff S_\mu \subseteq S_{\mu'}.
    \]
    In particular, every stable matching $\mu \in \mathcal{S}$ is uniquely obtained by starting at $\mu_F$ and eliminating all cycles belonging to its corresponding closed set $S_\mu$.
\end{enumerate}

\section{Characterization of Quantile Stable Matchings}
\label{sec:characterization}

Let $N = |\mathcal{S}|$ denote the total number of stable matchings. For each agent $a \in F \cup W$, consider the collection of subsets of partners assigned to $a$ across all stable matchings, $\{\mu(a) : \mu \in \mathcal{S}\}$. We sort these sets from most preferred to least preferred according to the strict preference relation $P_a$:
\begin{equation*}
\mu^{(1)}(a) >_a \mu^{(2)}(a) >_a \dots >_a \mu^{(N)}(a),
\end{equation*}
where the parenthesized superscript $(i)$ indicates an order statistic: $\mu^{(i)}(a)$ represents the $i$-th most-preferred set of partners received by agent $a$ across the stable set $\mathcal{S}$. 

We now formalize the market-wide compromise matchings introduced by \cite{chen2016median}, where a superscript $i$ (without parentheses) indexes the resulting matching.

\begin{definition}[Quantile Stable Matching]
\label{defino quantile stable}
For each integer $i \in \{1, \dots, N\}$, the $i$-th quantile stable matching for firms, denoted by $\mu_F^i$, is defined by:
\begin{equation*}
\mu_F^i(f) = \mu^{(i)}(f) \quad \text{for all } f \in F.
\end{equation*}
Symmetrically, the $i$-th quantile stable matching for workers, denoted by $\mu_W^i$, is defined by:
\begin{equation*}
\mu_W^i(w) = \mu^{(i)}(w) \quad \text{for all } w \in W.
\end{equation*}
\end{definition}
In particular, the extreme quantiles coincide with the lattice boundaries: $\mu_F^1$ is the firm-optimal stable matching $\mu_F$ (where each firm $f$ receives its first-choice stable set of workers $\mu^{(1)}(f)$), whereas $\mu_F^N$ is the worker-optimal stable matching $\mu_W$ (the least-preferred stable outcome for every firm).
Under strong substitutability and LAD, Theorem 2 in \cite{chen2016median} guarantees that $\mu_F^i$ is indeed a stable matching for each $i \in \{1, \dots, N\}$, and satisfies the polarization of interests: $\mu_F^i = \mu_W^{N-i+1}$.

We now characterize these matchings directly through the structure of preference cycles.

\begin{definition}
For each preference cycle $\sigma \in \Sigma$, let $n_\sigma$ denote the number of closed sets of cycles that do not contain $\sigma$:
\[
n_\sigma = |\{ S \in \mathcal{C}(\Sigma) : \sigma \notin S \}|.
\]
\end{definition}

\begin{lemma}\label{lemma sobre el numero de closed sets}
The function $n_\sigma$ is strictly increasing in the cycle precedence order: for any $\sigma,\sigma'\in \Sigma$, $\sigma < \sigma'$ implies that $n_\sigma < n_{\sigma'}$.
\end{lemma}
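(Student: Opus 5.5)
Since $\sigma < \sigma'$, every closed set containing $\sigma'$ also contains $\sigma$. Taking contrapositives, any $S \in \mathcal{C}(\Sigma)$ with $\sigma \notin S$ also satisfies $\sigma' \notin S$. Hence
\[
\{ S \in \mathcal{C}(\Sigma) : \sigma \notin S \} \subseteq \{ S \in \mathcal{C}(\Sigma) : \sigma' \notin S \},
\]
which gives $n_\sigma \le n_{\sigma'}$. This inclusion uses nothing beyond the definition of a closed set of cycles as a down-set in the precedence order.

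For strictness, I will exhibit one closed set that lies in the right-hand family but not in the left-hand one, namely a closed set containing $\sigma$ but not $\sigma'$. The natural candidate is the principal down-set
\[
S^* = \downarrow\sigma = \{\tau \in \Sigma : \tau \le \sigma\}.
\]
It is closed because $\le$ is transitive, and it contains $\sigma$ by reflexivity. It does not contain $\sigma'$: if $\sigma' \le \sigma$, then together with $\sigma \le \sigma'$ antisymmetry would force $\sigma = \sigma'$, contradicting $\sigma < \sigma'$. So $S^*$ belongs to $\{S : \sigma' \notin S\}$ but not to $\{S : \sigma \notin S\}$, and therefore $n_\sigma < n_{\sigma'}$.

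The main point needing care is that the precedence relation on $\Sigma$ really is a partial order, in particular antisymmetric. Otherwise principal down-sets could fail to separate $\sigma$ from $\sigma'$. I would justify this through the Birkhoff correspondence recalled in Section~\ref{sec:cycles}: the cycles correspond to the join-irreducible elements of the distributive lattice $(\mathcal{S}, \ge_F)$, precedence is the induced order on these elements, and so it is a genuine partial order. Alternatively, the definition gives antisymmetry directly: if each of two distinct cycles had to be eliminated before the other could be exposed, neither could ever be exposed along any path from $\mu_F$, contradicting $\sigma, \sigma' \in \Sigma$. With this in place, the argument is purely order-theoretic and does not use the specific structure of the matching market.
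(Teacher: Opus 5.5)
Your proof is correct and follows the same route as the paper. Both compare the families of closed sets omitting $\sigma$ and omitting $\sigma'$, and both use the principal down-set $\{\tau\in\Sigma:\tau\le\sigma\}$ as the witness for strictness. Your inclusion $\{S\in\mathcal{C}(\Sigma):\sigma\notin S\}\subseteq\{S\in\mathcal{C}(\Sigma):\sigma'\notin S\}$ is the correct direction, whereas the paper's written proof states the implication, the inclusion, and the witness's membership backwards, although its conclusion agrees with yours. Your explicit check of antisymmetry also supplies a point the paper simply assumes.
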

\begin{proof}
Let $S \in \mathcal{C}(\Sigma)$ and let $\sigma, \sigma' \in \Sigma$ with $\sigma < \sigma'$. Since $S$ is closed under precedence, $\sigma' \notin S$ implies $\sigma \notin S$. Thus,
\[
\{S \in \mathcal{C}(\Sigma) : \sigma' \notin S\} \subseteq \{S \in \mathcal{C}(\Sigma) : \sigma \notin S\}.
\]
To show that the inclusion is strict, consider the set $S_\sigma = \{\sigma'' \in \Sigma : \sigma'' \le \sigma\}$. By definition of precedence, $S_\sigma$ is a closed set of cycles containing $\sigma$. Moreover, since $\sigma < \sigma'$ and the precedence order is strict, $\sigma' \notin S_\sigma$. Thus, $S_\sigma$ belongs to the right-hand set but not to the left-hand set, which implies $n_\sigma < n_{\sigma'}$.
\end{proof}

\begin{theorem}\label{teorema de eliminar ciclos es un matching quantil estable}
Assume that agents' preferences satisfy strong substitutability and the law of aggregate demand (LAD). For each $i \in \{1, \dots, N\}$, define:
\[
S(i) = \{ \sigma \in \Sigma : n_\sigma < i \}.
\]
Then:
\begin{enumerate}[(i)]
    \item $S(i)$ is a closed set of cycles in $(\Sigma, \le)$.
    \item The $i$-th quantile stable matching $\mu_F^i$ is precisely the stable matching obtained by eliminating all cycles in $S(i)$ starting from $\mu_F$.
\end{enumerate}
\end{theorem}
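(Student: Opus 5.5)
Part (i) follows directly from Lemma \ref{lemma sobre el numero de closed sets}. Let $\sigma \in S(i)$ and $\sigma' \le \sigma$. If $\sigma' = \sigma$ there is nothing to prove. Otherwise $\sigma' < \sigma$, and the lemma gives $n_{\sigma'} < n_\sigma < i$, so $\sigma' \in S(i)$. Hence $S(i)$ is a down-set, i.e.\ $S(i) \in \mathcal{C}(\Sigma)$. By the Birkhoff correspondence of Section \ref{sec:cycles}, $S(i)$ then determines a unique stable matching $\nu^i$, obtained from $\mu_F$ by eliminating exactly the cycles in $S(i)$. It remains to show $\nu^i(f) = \mu^{(i)}(f)$ for every firm $f$.

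For part (ii), fix a firm $f$ and describe $f$'s assignment at a closed set $S$ in terms of cycles. Let $\Sigma_f \subseteq \Sigma$ be the set of cycles in which $f$ appears. Eliminating a cycle changes $f$'s assignment only if the cycle lies in $\Sigma_f$, and each such elimination moves $f$ strictly downward in $P_f$, since $\mu >_F \mu_\sigma$ and Blair's order agrees with $P_f$ on stable assignments.

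The first key step is to show that $\Sigma_f$ is a chain in $(\Sigma,\le)$. I would argue this as in the one-to-one rotation theory. Take two cycles in $\Sigma_f$ that are incomparable. Then there would be closed sets $S \ni \sigma$, $S \not\ni \sigma'$ and $S' \ni \sigma'$, $S' \not\ni \sigma$. This gives stable matchings whose $f$-assignments are not Blair-comparable, contradicting the fact that under strong substitutability $P_f$ is a total order on $f$'s stable assignments that coincides with Blair's order. A cleaner route is to show that each cycle involving $f$ is exposed only when $f$ holds one specific stable set, which forces a linear order on $\Sigma_f$. Write the chain as $\sigma_1 < \sigma_2 < \dots < \sigma_k$. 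Then $f$'s assignment at a closed set $S$ is determined by $|S \cap \Sigma_f| = j$: it is the $(j+1)$-th distinct set in the descending chain $A_0 >_f A_1 >_f \dots >_f A_k$ of $f$'s stable assignments.

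The counting step comes next. The quantile $\mu^{(i)}(f)$ is taken with multiplicity over all $N = |\mathcal{C}(\Sigma)|$ stable matchings. The number of closed sets at which $f$ gets $A_j$ or something better is the number of closed sets not containing $\sigma_{j+1}$, which is $n_{\sigma_{j+1}}$ (with the convention $n_{\sigma_{k+1}} = N$). Therefore $\mu^{(i)}(f) = A_j$ exactly when $n_{\sigma_j} < i \le n_{\sigma_{j+1}}$, using $n_{\sigma_0}=0$ and Lemma \ref{lemma sobre el numero de closed sets} for monotonicity along the chain. On the other hand, $S(i) \cap \Sigma_f = \{\sigma_l : n_{\sigma_l} < i\} = \{\sigma_1,\dots,\sigma_j\}$ by the same monotonicity. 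So $\nu^i(f) = A_j = \mu^{(i)}(f)$. This holds for all $f$, so $\nu^i = \mu_F^i$. Stability of $\mu_F^i$ is supplied independently by \cite{chen2016median}, although the argument above does not need it.

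The main obstacle is the chain property of $\Sigma_f$, together with the claim that $f$'s assignment depends only on $|S \cap \Sigma_f|$. This requires that the elimination of every cycle in $\Sigma_f$ produces a strictly new assignment for $f$, and that cycles not involving $f$ leave $f$ untouched. Both facts come from the definition of $\mu_\sigma$. The comparability argument, by contrast, must rely on strong substitutability making the stable assignments of $f$ totally ordered. I would first try to derive the chain property from that total order, using the distributive-lattice meet and join of the two witnessing matchings.
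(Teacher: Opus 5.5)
Your proof is correct in outline, but it takes a different route from the paper. The paper argues cycle by cycle. It asserts that $\sigma\in S(\mu_F^i)$ iff $n_\sigma<i$, on the grounds that every agent in $\sigma$ ``must have already traversed the transition induced by $\sigma$'' by its $i$-th choice. It also needs the stability theorem of \cite{chen2016median} just to have a closed set $S(\mu_F^i)$ to compare with. That assertion silently relies on the fact you isolate: the closed sets omitting $\sigma$ are exactly those at which a firm of $\sigma$ still holds its pre-$\sigma$ set or better. In other words, $\Sigma_f$ is a chain and $f$'s set depends only on $|S\cap\Sigma_f|$.

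Your firm-by-firm argument, in the style of \cite{cheng2008generalized}, buys several things:
\begin{enumerate}
\item It states the chain property explicitly.
\item It makes the multiplicity convention for $\mu^{(i)}(f)$ explicit; the paper's strict chain of $N$ sets cannot literally hold.
\item It pinpoints where strong substitutability enters.
\item It yields stability of $\mu_F^i$ as a by-product, since $\nu^i$ is stable by construction.
\end{enumerate}
The paper's version is shorter, but it essentially asserts the conclusion.

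On the chain lemma itself, your first sketch does not work for arbitrary witnesses. Closed sets $S\ni\sigma$, $S\not\ni\sigma'$ and $S'\ni\sigma'$, $S'\not\ni\sigma$ may well give $f$ Blair-comparable sets. Your ``cleaner route'' also fails on its own, because it does not exclude two incomparable cycles that both leave the same set $A$.

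Your meet/join plan works once you choose the witnesses carefully:
\begin{enumerate}
\item For incomparable $\sigma,\sigma'\in\Sigma_f$, let $T=(S_\sigma\cup S_{\sigma'})\setminus\{\sigma,\sigma'\}$. Then $T$, $T\cup\{\sigma\}$ and $T\cup\{\sigma'\}$ are all closed.
\item In the inverted Birkhoff correspondence, the $\ge_F$-join of the matchings of $T\cup\{\sigma\}$ and $T\cup\{\sigma'\}$ is the matching of their intersection $T$.
\item Suppose $f$ holds $A$ at $T$, and holds $B\neq A$ after eliminating $\sigma$ and $B'\neq A$ after eliminating $\sigma'$.
\item Under substitutability and LAD the join is pointwise \cite{alkan2002class}, so $f$ receives $C_f(B\cup B')$.
\item Since $f$'s stable assignments are totally ordered by Blair's order, $C_f(B\cup B')$ is $B$ or $B'$, contradicting $A$.
\end{enumerate}
With this lemma in hand, your counting goes through as written. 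That is, $\mu^{(i)}(f)=A_j$ iff $n_{\sigma_j}<i\le n_{\sigma_{j+1}}$, and $S(i)\cap\Sigma_f=\{\sigma_1,\dots,\sigma_j\}$ by Lemma~\ref{lemma sobre el numero de closed sets}.
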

\begin{proof}
To prove (i), let $\sigma \in S(i)$ and suppose $\sigma' \le \sigma$ for $\sigma'\in \Sigma$. By Lemma \ref{lemma sobre el numero de closed sets}, $n_{\sigma'} \le n_\sigma < i$, which implies that $\sigma' \in S(i)$. Therefore, $S(i)$ is a closed set of cycles in $(\Sigma, \le)$.

To prove (ii), let $S(\mu_F^i)$ denote the unique closed set of cycles in $\mathcal{C}(\Sigma)$ corresponding to the $i$-th quantile stable matching $\mu_F^i$. We must prove that $S(\mu_F^i) = S(i)$. 

Recall that for each cycle $\sigma \in \Sigma$, $n_\sigma$ counts the number of stable matchings whose corresponding closed set does not contain $\sigma$:
\[
n_\sigma = |\{ \mu \in \mathcal{S} : \sigma \notin S_\mu \}|.
\]
By Definition \ref{defino quantile stable} (whose stability is guaranteed by Theorem 2 in \cite{chen2016median}), each firm $f \in F$ receives under $\mu_F^i$ its $i$-th preferred stable set of workers $\mu(f)^{(i)}$. Since preferences are strict and aligned with the lattice operations under strong substitutability, a cycle $\sigma$ has been eliminated at $\mu_F^i$ (that is, $\sigma \in S(\mu_F^i)$) if and only if the number of stable matchings that dominate the execution of $\sigma$ is strictly less than $i$.

Indeed, if $n_\sigma < i$, there are fewer than $i$ stable matchings in which $\sigma$ remains uneliminated; thus, across the sorted list of $N$ stable assignments, every agent involved in $\sigma$ must have already traversed the transition induced by $\sigma$ before or at their $i$-th choice, implying $\sigma \in S(\mu_F^i)$. Conversely, if $n_\sigma \ge i$, there are at least $i$ stable matchings where $\sigma$ has not yet been executed, which means that at the $i$-th preferred stable outcome for firms, the cycle $\sigma$ has not been eliminated, so $\sigma \notin S(\mu_F^i)$.

Consequently, a cycle $\sigma$ is eliminated in $\mu_F^i$ precisely when $n_\sigma < i$. This establishes that $S(\mu_F^i)$ and $S(i)$ contain the exact same cycles and, therefore,  $S(\mu_F^i) = S(i)$.
\end{proof}

\section{Conclusion}\label{sec:conclusion}
In this paper, we have established a combinatorial characterization of quantile stable matchings in two-sided many-to-many matching markets where agents' preferences satisfy strong substitutability and the law of aggregate demand. By relying exclusively on the notion of cycles in preferences and their induced precedence order, we demonstrate that each quantile stable matching is uniquely determined by the elimination of a closed set of cycles, identified via an ideal-counting condition.

This characterization extends the classic rotation-based framework of \cite{cheng2008generalized} from one-to-one markets to many-to-many environments without imposing artificial chain structures on individual agents. Importantly, it reveals that the compromise matchings across the entire distributive lattice are fully encoded within the poset of preference cycles, bypassing the need for exhaustive outcome enumeration or agent-by-agent partner sorting. 

A remark on computational complexity is in order. In many-to-many markets, dealing with preferences over sets of partners requires access to a choice oracle, as in \cite{bonifacio2022cycles}. Even with such an oracle, computing exact quantile stable matchings is computationally demanding, as the underlying problem of counting ideals in a poset is known to be \#P-complete \citep{irving1986complexity, cheng2008generalized}. Whether this cycle-based framework can facilitate efficient approximations or heuristics for compromise stable matchings remains an interesting open question.

Beyond computational aspects, our combinatorial identification of quantile stable matchings also provides an operational tool for mechanism designers evaluating the balance between fairness and incentive compatibility. While quantile stable matchings offer appealing egalitarian compromises, recent findings by \cite{arribillaga2025note} demonstrate that any non-extremal quantile mechanism becomes susceptible to obvious manipulations when contractual terms are present. Understanding how our structural cycle representation interacts with strategic bidding, manipulation incentives, or constrained choice domains constitutes a natural avenue for future research.
\appendix
\section{Appendix: The Preference Reduction Procedure}
\label{app:reduction}

Given a strongly substitutable preference profile and two Blair-comparable (for the firms) stable matchings, this reduction procedure generates a new preference profile in which the most Blair-preferred stable matching is the firm-optimal matching and the least Blair-preferred stable matching is the worker-optimal matching, for the market identified with this new preference profile \citep{bonifacio2022cycles}. The reduction procedure is described as follows.

Let $\mu$ and $\tilde{\mu}$ be stable matchings such that $\mu \ge_F \tilde{\mu}$.

\begin{enumerate}
\item[\textbf{Step 1:}] 
\begin{enumerate}[(a)]
\item For each $f \in F$, each $W' \subseteq W$ such that $W' >_f \mu(f)$, and each $\tilde{w} \in W' \setminus \mu(f)$, remove each $\widetilde{W} \subseteq W$ such that $\tilde{w} \in \widetilde{W}$ from $f$'s list of acceptable sets of workers.

\item For each $w \in W$, each $F' \subseteq F$ such that $F' >_w \tilde{\mu}(w)$, and each $\tilde{f} \in F' \setminus \tilde{\mu}(w)$, remove each $\widetilde{F} \subseteq F$ such that $\tilde{f} \in \widetilde{F}$ from $w$'s list of acceptable sets of firms.
\end{enumerate}

\item[\textbf{Step 2:}] 
\begin{enumerate}[(a)]
\item For each $f \in F$, each $W' \subseteq W$ such that $\tilde{\mu}(f) >_f W'$, and each $\tilde{w} \in W' \setminus \tilde{\mu}(f)$, remove each $\widetilde{W} \subseteq W$ such that $\tilde{w} \in \widetilde{W}$ from $f$'s list of acceptable sets of workers.

\item For each $w \in W$, each $F' \subseteq F$ such that $\mu(w) >_w F'$, and each $\tilde{f} \in F' \setminus \mu(w)$, remove each $\widetilde{F} \subseteq F$ such that $\tilde{f} \in \widetilde{F}$ from $w$'s list of acceptable sets of firms.
\end{enumerate}

\item[\textbf{Step 3:}] After Steps 1 and 2 are performed, if $f$ is not acceptable for $w$ (that is, if $\{f\}$ is not on $w$'s preference list as now modified), remove each $W' \subseteq W$ such that $w \in W'$ from $f$'s list of acceptable sets of workers. If $w$ is not acceptable for $f$ (that is, if $\{w\}$ is not on $f$'s preference list as now modified), remove each $F' \subseteq F$ such that $f \in F'$ from $w$'s list of acceptable sets of firms.
\end{enumerate}

The profile obtained by this procedure is called the \textbf{reduced preference profile with respect to $\boldsymbol{\mu}$ and $\boldsymbol{\tilde{\mu}}$}, and is denoted by $\boldsymbol{P^{\mu,\tilde{\mu}}}$. When $\tilde{\mu} = \mu_W$, the profile is simply called the \textbf{reduced preference profile with respect to $\boldsymbol{\mu}$}, and is denoted by $\boldsymbol{P^{\mu}}$.

\end{document}

%% file: preamble.tex
\usepackage[utf8]{inputenc}
\usepackage[T1]{fontenc}
\usepackage{mathpazo}

\usepackage{amsmath,amssymb,amsfonts,amsthm}
\usepackage{mathrsfs}
\usepackage{nccmath}
\usepackage{cancel}

\usepackage{booktabs}
\usepackage{enumerate}
\usepackage{geometry}
\usepackage{graphicx}
\usepackage[usenames]{color}
\usepackage{soul}
\usepackage{emptypage}
\usepackage{natbib}
\usepackage{fancyhdr}

\usepackage{pgf,tikz}
\usetikzlibrary{intersections}
\usetikzlibrary{decorations.pathreplacing,angles,quotes}
\usetikzlibrary{arrows.meta}
\usetikzlibrary{arrows,decorations.markings}

\tikzset{
  myptr/.style={
    decoration={
      markings,
      mark=at position 1 with {\arrow[scale=2,>=stealth]{>}}
    },
    postaction={decorate}
  }
}

\usepackage{hyperref}
\hypersetup{
    colorlinks=true,
    citecolor=blue,
    filecolor=black,
    linkcolor=blue,
    urlcolor=blue
}

\newcommand*\samethanks[1][\value{footnote}]{\footnotemark[#1]}

\DeclareFontFamily{T1}{calligra}{}
\DeclareFontShape{T1}{calligra}{m}{n}{<-> s*[1.44] callig15}{}
\DeclareMathAlphabet{\mathcalligra}{T1}{calligra}{m}{n}

\newtheorem{theorem}{Theorem}

\newtheorem{definition}{Definition}

\newtheorem{example}{Example}

\newtheorem{lemma}{Lemma}

\renewenvironment{proof}[1][Proof]{\noindent \emph{#1.} }{\hfill$\square$\vspace{5pt}}